\newtheorem{mytheorem}{\bf Theorem}
\newtheorem{mydefinition}{\bf Definition}
\newtheorem{myexample}{\bf Example}
\newcolumntype{C}{>{\centering\arraybackslash}p{1cm}}
\newcommand*{\QEDB}{\null\nobreak\hfill\ensuremath{\blacksquare}}
\begin{document}

\title{Fair Division meets Vehicle Routing: \\ Fairness for Drivers with Monotone Profits}
\titlerunning{Fair Division meets Vehicle Routing}

\author{Martin Damyanov Aleksandrov\orcidlink{0000-0003-0047-1235}}
\authorrunning{Martin D.\ Aleksandrov}
\institute{Freie Universit\"{a}t Berlin, Germany \\
        {\tt\small martin.aleksandrov@fu-berlin.de}}

\maketitle

\begin{abstract}
We propose a new model for fair division and vehicle routing, where drivers have monotone profit preferences, and their vehicles have feasibility constraints, for customer requests. For this model, we design two new axiomatic notions for fairness for drivers: FEQ1 and FEF1. FEQ1 encodes driver pairwise bounded equitability. FEF1 encodes driver pairwise bounded envy freeness. We compare FEQ1 and FEF1 with popular fair division notions such as EQ1 and EF1. We also give algorithms for guaranteeing FEQ1 and FEF1, respectively.
\keywords{Fair division \and Vehicle routing \and Monotone profits}
\end{abstract}

\section{Introduction}\label{sec:intro}

Let us consider the classical Vehicle Routing Problem (VRP) \cite{dantzig1959}. In this problem, there is a single vehicle and a set of visit requests. Generalisations of the VRP consider a fleet of multiple vehicles and a set of pickup-and-delivery requests, see e.g.\ \cite{savelsbergh1995}. We study social aspects in VRPs. We thus define a domain for {\em Social VRPs (SVRPs)}. In Figure~\ref{fig:svrps}, we depict this domain.

\begin{figure}[h]
\centering
\includegraphics[width=0.75\columnwidth,height=6cm]{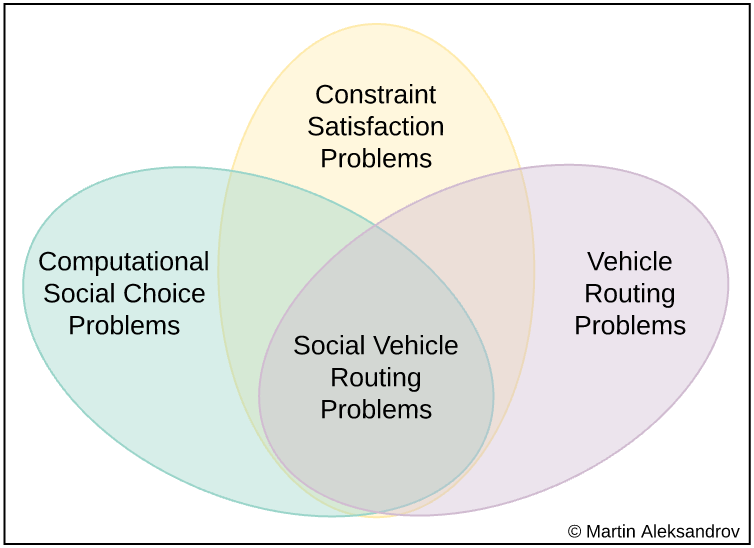}
\caption{The Social Vehicle Routing Problem.}
\label{fig:svrps}
\end{figure}  

SVRPs can combine features from fair division, voting theory, judgment aggregation, planning, loading, and search. In this paper, as an initial step, we draw inspiration from monotone fair division and VRPs. Thus, we add to the initial SVRP research from \cite{aleksandrov2021iv,aleksandrov2021epia,aleksandrov2021hfif,aleksandrov2021optima}. We have every confidence that this new domain lays down the blueprints for much further research in the future. 

Our inspiration is in line with multiple programmes across Europe in regard to emerging VRPs. Emerging VRPs include autonomous vehicles, connected vehicles, electric vehicles, intelligent garbage vehicles, and data-driven logistics. For example, the 2020 EU Strategy for Sustainable and Smart Mobility has formulated public mobility Transport Policy Flagships, according to which the transition to emerging VRPs must involve the {\em preferences} of individuals. 

In the previous years, there are some preliminary steps in the direction of this transition: in 2016, the German Ministry for Traffic and Digital Infrastructure has granted one hundred million Euros for autonomous and connected driving; in 2017, the German Ministry appointed Ethics Commission to regulate the use of such vehicles for social good; in 2020, the European Institute of Innovation and Technology in Hungary received four hundred million Euros from EU for encouraging people to use electric vehicles more frequently in order to make our cities more livable places. Furthermore, to continue the discussion about the transition to future public mobility, the 2022 Transport Research Arena hosts a number of tracks on this topic.

Besides achieving the safety and explainability of emerging VRP technologies, another goal in all these programmes is achieving trust. By Deloitte's Trustworthy AI Framework, promoting trust requires that vehicles are used fairly. In this paper, we provide an early qualitative analysis of {\em fairness for drivers}. Achieving fairness for drivers might be orthogonal to a common VRP objective such as minimising the total time travelled by all vehicles. We illustrate this incompatibility in Example~\ref{exp:one}.

\begin{myexample}\label{exp:one}
{\bf (Fairness for drivers)} Let us consider $[0,1]$. Also, let there be vehicle $v_1$ at $0$, vehicle $v_2$ at $1$, visit request $r_1$ at $\epsilon\in [0,\frac{1}{4})$, and visit request $r_2$ at $2\cdot\epsilon$. Suppose that each driver derives a profit of $1$\$ per visit. This could be the cost they charge the customer who submitted the request. Minimising the total time travelled by $v_1$ and $v_2$ dispatches first $v_1$ to $\epsilon$ and then to $2\cdot\epsilon$. The objective value is equal to $2\cdot\epsilon$. However, this outcome is {\em not fair} for drivers because the driver of $v_1$ receives a profit of $2$\$ whereas the driver of $v_2$ receives a profit of $0$\$. By comparison, assigning $r_1$ to $v_1$ and $r_2$ to $v_2$, and then dispatching $v_1$ to $\epsilon$ and $v_2$ to $2\cdot\epsilon$, gives an objective value of $(1-\epsilon)$ for the total time travelled by both vehicles. For $\epsilon\in [0,\frac{1}{4})$, this value is strictly greater than $2\cdot\epsilon$. Although this outcome is sub-optimal for the total travel time objective, it is {\em fair} for drivers because each of them receives a profit of $1$\$.\QEDB
\end{myexample}

In Example~\ref{exp:one}, we also note that achieving fairness for drivers requires us to {\em isolate} the assignment sub-problem from the routing sub-problem. In this paper, we do exactly this. As a first step, we formalise a fair division model where fleet $V=\lbrace v_1,\ldots,v_n\rbrace$ of $n\in\mathbb{N}_{\geq 0}$ vehicles services set $R=\lbrace r_1,\ldots,r_m\rbrace$ of $m\in\mathbb{N}_{\geq 0}$ requests. Thus, within this formalisation, we will show that solving the assignment sub-problem can {\em always} guarantee fairness for drivers. 

\section{Profits and feasibilities}\label{sec:pf}

In our model, for the driver $i$ of each $v_i$, we consider a {\em monotone} profit function $p_i:2^R\rightarrow\mathbb{R}_{\geq 0}$ for the requests from $R$: for each $S\subseteq R$ and each $r_j\in R$, $p_i(S\cup\lbrace r_j\rbrace)\geq p_i(S)$. We let $p_i(\emptyset)=0$ and $p_{ij}=p_i(\lbrace r_j\rbrace)$. Three common types of such functions are additive, sub-additive, and super-additive monotone functions \cite{mirchandani2013}. For $S\subseteq R,T\subseteq R$, additivity of $p_i$ requires $p_i(S\cup T)=p_i(S)+p_i(T)$, sub-additivity of $p_i$ requires $p_i(S\cup T)\leq p_i(S)+p_i(T)$, and super-additivity of $p_i$ requires $p_i(S\cup T)\geq p_i(S)+p_i(T)$. We consider the more general monotone profits.

In our model, each vehicle can be feasible or not for every request. We encode thus {\em feasibilities} of vehicles for requests. In detail, we suppose that there is some set of feasibility constraints $C_{ij}$ for each $v_i$ and each $r_j$. We model thus the feasibility of $v_i$ for $r_j$ by using a binary indicator $f_{ij}$ such that $f_{ij}=1$ if all constraints in $C_{ij}$ can be satisfied, and else $f_{ij}=0$. We can use these indicators to model various practical constraints. 1-D vehicle capacities and 3-D package dimensions are two common types of such constraints \cite{mannel2018}.

\section{Contributions}\label{sec:cont}

We consider assignments of requests to drivers. An {\em assignment} is $(R_1,\ldots,R_n)$, where $R_i\subseteq R$ for each $v_i$ and $R_i\cap R_j=\emptyset$ for each $(v_i,v_j)$ such that $i\neq j$. We say that it is {\em feasible} if, for each $v_i$ and each $r_j\in R_i$, $f_{ij}>0$ holds. We say that it is {\em complete} if, for each $r_j\in R$ with $f_{ij}>0$ for some $v_i\in V$, $r_j\in R_k$ holds for some $v_k\in V$, and for each $r_j\in R$ with $f_{ij}=0$ for each $v_i\in V$, $r_j\not\in\textstyle\bigcup_{v_i\in V} R_i$ holds. In our work, we focus on combining feasible complete assignments with fairness for drivers. In Section~\ref{sec:rel}, we review some related works. In Section~\ref{sec:feasone}, we consider in our context two popular such notions: EF1 and EQ1. For any two drivers, EQ1/EF1 assignments bound the absolute {\em jealousy} (i.e.\ objective profit difference)/{\em envy} (i.e.\ subjective profit difference) for the requests for which their vehicles are feasible or {\em not}. EQ1 and EF1 were proposed in fair division of goods, where feasibilities for goods are absent. For this reason, they ignore the feasibilities in our model. This leads to an impossibility result for feasible complete EQ1/EF1 assignments: Theorem~\ref{thm:one}. We then prove that deciding whether such assignments exist is $\NP$-hard: Theorem~\ref{thm:two}. In Section~\ref{sec:feastwo}, we propose respectively two new notions: FEQ1 and FEF1. For any two drivers, FEQ1/FEF1 assignments bound the absolute {\em feasible jealousy/envy} between them, i.e.\ the jealousy/envy for the requests for which their vehicles are {\em feasible}. In settings where each vehicle is feasible for each request (i.e.\ with {\em unit} feasibilities), FEQ1 coincides with EQ1 and FEF1 coincides with EF1. We prove that feasible complete assignments may falsify FEQ1/FEF1 and complete FEQ1/FEF1 assignments may falsify feasibility: Theorem~\ref{thm:three}. Nevertheless, we give a polynomial-time algorithm for computing feasible complete FEQ1 assignments: Theorem~\ref{thm:four}. Furthermore, we give another pseudo-polynomial-time algorithm for computing FEF1 assignments as well: Theorem~\ref{thm:five}. Figure~\ref{fig:two} depicts our results.

\begin{figure}[t]
\centering
\includegraphics[width=0.45\columnwidth,height=4cm]{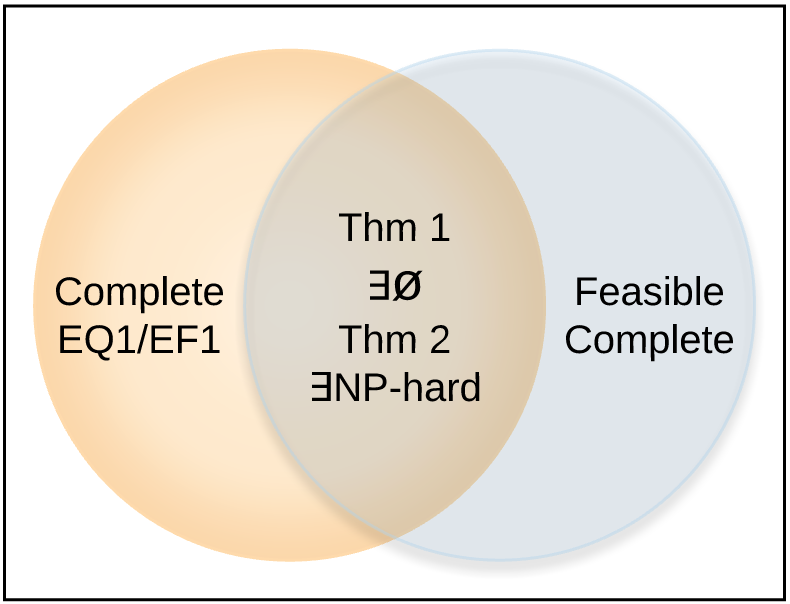}\hspace{0.5cm}\includegraphics[width=0.45\columnwidth,height=4cm]{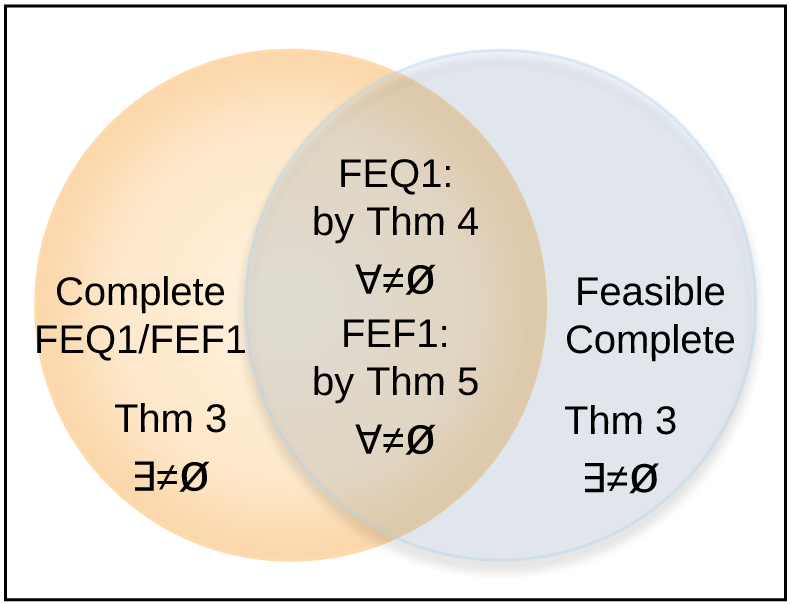}
\caption{An overview of our results for fairness for drivers with monotone profits and vehicle feasibilities: $\exists\emptyset$ means that there is a setting with additive profits, where an assignment does not exist (Theorem~\ref{thm:one}); $\exists\NP$-hard means that there is a setting with additive profits, where deciding whether an assignment exists takes exponential time (Theorem~\ref{thm:two}); $\exists\neq\emptyset$ means that there is a setting with additive profits, where an assignment exists (Theorem~\ref{thm:three}); $\forall\neq\emptyset$ means that there is no setting with monotone profits, where an assignment does not exist (by Theorems~\ref{thm:four} and~\ref{thm:five}, respectively).}
\label{fig:two}
\end{figure}  

\section{Some related work}\label{sec:rel}

Santos and Xavier \cite{douglas2013} studied minimising shared taxi costs among customers. Li et al.\ \cite{baoxiang2014} considered a similar setting with people and parcels. Another work about customer ride-sharing is \cite{bei2018}. In this paper, we consider {\em driver} (i.e.\ not customer) preferences. Ma, Fang, and Parkes \cite{hongyao2019} proposed a mechanism for spatio-temporal settings with additive pricing for drivers, where envy-freeness for any fixed two drivers is guaranteed only if they are at the same (i.e.\ not different) locations. Rheingans-Yoo et al.\ \cite{rheingans2019} analysed matchings with driver location (i.e.\ neither profit nor feasibility) preferences. Xu and Xu \cite{yifan2020} investigated trading the system efficiency for the income equality of drivers. Lesmana, Zhang, and Bei \cite{lesmana2019} studied fairness for drivers in terms of maximising the minimum additive (i.e.\ not monotone) trip profit of any driver. We add to these works by analysing new {\em bounded feasible envy-freeness} (i.e.\ FEF1) and new {\em bounded feasible equitability} (i.e.\ FEQ1) in various practical domains with {\em monotone} profits and {\em feasibility} constraints. The bound is the greatest marginal profit of a request.

\section{Feasible complete EQ1/EF1 assignments}\label{sec:feasone}

We begin our analysis with interactions between feasible complete assignments and popular fairness concepts such as EQ1 and EF1. For this purpose, we define EQ1 and EF1 in our terms.

\begin{mydefinition}
$(R_1,\ldots,R_n)$ is {\em EQ1} if, for each $v_i,v_k\in V$ s.t.\ $R_k\neq\emptyset$, $p_i(R_i)\geq p_k(R_k\setminus\lbrace r_j\rbrace)$ holds for some $r_j\in R_k$.
\end{mydefinition}  

\begin{mydefinition}
$(R_1,\ldots,R_n)$ is {\em EF1} if, for each $v_i,v_k\in V$ s.t.\ $R_k\neq\emptyset$, $p_i(R_i)\geq p_i(R_k\setminus\lbrace r_j\rbrace)$ holds for some $r_j\in R_k$.
\end{mydefinition}  

There are settings with unit feasibilities where the set of complete EQ1 assignments and the set of complete EF1 assignments are disjoint. We show this incompatibility in Example~\ref{exp:two}.

\begin{myexample}\label{exp:two}
Let us consider $v_1$, $v_2$ and visit $r_1,\ldots,r_4$. Define the profits as: for each $r_j$, $p_{1j}=4$ and $p_{2j}=1$. Suppose additive profits and unit feasibilities. We make two observations. To achieve completeness and EF1, we must give exactly two requests to each driver. Otherwise, one of the drivers would envy the other driver, even after removing one request from the other driver's bundle. 

However, each complete EF1 assignment violates EQ1 because driver $2$ has a profit of $2$ which is strictly lower than the profit of $4$ of driver $1$ after a request from the bundle of driver $1$ is removed. In contrast, each complete EQ1 assignment gives one request to driver $1$ and three requests to driver $2$. \QEDB
\end{myexample}

In fair division without feasibilities, complete EQ1 assignments can be computed in polynomial time, and complete EF1 assignments can also be computed in polynomial time \cite{lipton2004}. 

By comparison, in our setting with feasibilities, each feasible complete assignment may fail to satisfy EQ1/EF1 in some settings. The problem here is that such an assignment may give all requests to a given driver.

\begin{mytheorem}\label{thm:one}
In some settings with vehicle feasibilities and additive driver profits, all complete EQ1/EF1 assignments may not be feasible and all feasible complete assignments may not be EQ1/EF1.
\end{mytheorem}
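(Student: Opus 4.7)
The plan is to exhibit a single small instance with additive profits and vehicle feasibilities that simultaneously witnesses both failure modes asserted in the statement. Take two vehicles $v_1,v_2$ and two requests $r_1,r_2$. Set additive unit profits $p_{1j}=p_{2j}=1$ for $j\in\{1,2\}$. For the feasibilities, let $f_{1j}=1$ for both $j$ and $f_{2j}=0$ for both $j$, so that only $v_1$ is feasible for any request.

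First I would classify the feasible complete assignments. Since $f_{2j}=0$ for every $r_j$, feasibility forces $R_2=\emptyset$. On the other hand, each $r_j$ has at least one feasible vehicle (namely $v_1$), so the completeness clause requires $r_j$ to be assigned. Hence $R_1=\{r_1,r_2\}$ and $R_2=\emptyset$ is the unique feasible complete assignment. I would then evaluate EQ1 and EF1 on this assignment: taking $(i,k)=(2,1)$, we have $R_1\neq\emptyset$ while $p_2(R_2)=0$ and $p_2(R_1\setminus\{r_j\})=1>0$ for either choice of $r_j\in R_1$, which also equals $p_1(R_1\setminus\{r_j\})$. Therefore the unique feasible complete assignment violates both EQ1 and EF1, establishing the second conjunct of the theorem.

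Next I would exhibit a complete EQ1 and EF1 assignment that is not feasible, and then argue that every complete EQ1/EF1 assignment is infeasible in this instance. The candidate is $R_1=\{r_1\}$, $R_2=\{r_2\}$: it is complete, and for every ordered pair $(i,k)$ we have $p_i(R_i)=1\geq 0=p_i(R_k\setminus\{r_k\})$ for the unique $r_k\in R_k$, so EQ1 and EF1 both hold; however it is infeasible because $r_2\in R_2$ and $f_{22}=0$. Finally, since the only feasible complete assignment is the one shown above to violate EQ1 and EF1, no complete EQ1 (respectively EF1) assignment in this instance can be feasible.

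I do not foresee a serious obstacle: the construction only needs one vehicle to be infeasible for all requests while the other covers them all, which forces feasibility to concentrate every request on a single driver and thereby breaks any pairwise equitability or envy-freeness bound based on removing a single request. The only subtlety is checking the completeness definition carefully, since it distinguishes requests with some feasible vehicle (which must be assigned somewhere, feasible or not) from those with no feasible vehicle (which must be dropped); the choice above places us squarely in the first case for both $r_1$ and $r_2$, which is what makes the tension between completeness, feasibility, and EQ1/EF1 unavoidable.
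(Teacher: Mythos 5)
Your proposal is correct and uses essentially the same construction as the paper: two vehicles, two requests, unit additive profits, with $v_2$ infeasible for everything, so that the unique feasible complete assignment concentrates both requests on $v_1$ and violates EQ1/EF1, while any complete EQ1/EF1 assignment must place a request on $v_2$ and hence be infeasible. Your observation that the second conjunct immediately implies the first (since the unique feasible complete assignment fails EQ1/EF1) is a slightly cleaner way to close the argument than the paper's counting of the two complete EQ1/EF1 assignments, but the substance is identical.
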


\begin{proof}
Let there be $v_1,v_2$, and $r_1,r_2$. Also, define any strictly positive profits, and $f_{11}=f_{12}=1$ but $f_{21}=f_{22}=0$. For example, the profits could be all equal to $1$. Suppose drivers have additive profits. We make the following two observations for this setting. 

First, to achieve completeness and EQ1/EF1, we must assign exactly one request to each vehicle. There are two drivers and two requests, so there are two complete EQ1/EF1 assignments. Hence, each such assignment gives some infeasible request to $v_2$. Such an outcome cannot be feasible. 

Second, the unique feasible complete assignment gives both requests to $v_1$. This outcome violates EQ1/EF1 because the profits are strictly positive, and removing any single request from the bundle of the driver of $v_1$ does not eliminate the jealousy/envy of the driver of $v_2$. The result follows. \QEDB
\end{proof}

In some settings with feasibilities, we conclude that the set of feasible complete EQ1/EF1 assignments could be empty. In response, we study the computational complexity of deciding whether feasible complete EQ1/EF1 assignments exist. We formulate therefore the following decision problem and prove that it is weakly $\NP$-hard.

\begin{center}
\fbox{
\begin{minipage}{0.95\columnwidth}
Problem: {\sc Feasible Complete EQ1/EF1} \\
{\bf Data}: $V$, $R$, $(p_{ij})_{n\times m}$, $(f_{ij})_{n\times m}$. \\
{\bf Query}: Is there feasible complete EQ1/EF1 assignment $(R_1,\ldots,R_n)$?
\end{minipage}
}
\end{center}

\begin{mytheorem}\label{thm:two}
In some settings with vehicle feasibilities and additive driver profits, {\sc Feasible Complete EQ1/EF1} belongs to the class of weakly $\NP$-hard decision problems.
\end{mytheorem}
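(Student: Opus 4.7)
The plan is to reduce from {\sc Partition}, a classical weakly $\NP$-hard problem, to {\sc Feasible Complete EQ1} (and analogously to {\sc Feasible Complete EF1}). Membership in $\NP$ is immediate because a candidate assignment $(R_1,\ldots,R_n)$ can be verified in polynomial time by checking feasibility, completeness, and the EQ1 (respectively EF1) inequality for every ordered pair of drivers.

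For the hardness direction, given a {\sc Partition} instance $(a_1,\ldots,a_m)$ with $\sum_j a_j = 2S$, I would build an instance with two vehicles $v_1,v_2$ and $m+2$ requests. The first $m$ requests carry profits $p_{1j}=p_{2j}=a_j$ and are feasible for both vehicles. Two \emph{anchor} requests $r_{m+1},r_{m+2}$ are added with carefully chosen profits; $r_{m+1}$ is feasible only for $v_1$ while $r_{m+2}$ is feasible only for $v_2$. The anchor profits are selected so that any feasible complete assignment must pin the anchors to their unique feasible vehicles, and the remaining slack in the EQ1/EF1 condition forces the assignment of the first $m$ requests to induce exact balance $\sum_{r_j\in R_1,\, j\le m} a_j = S$.

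To close the reduction, I would verify two directions. In the forward direction, given a {\sc Partition} subset $I$ of value $S$, I would assign $\{r_j:j\in I\}\cup\{r_{m+1}\}$ to $v_1$ and $\{r_j:j\notin I\}\cup\{r_{m+2}\}$ to $v_2$, then check directly that EQ1 (respectively EF1) holds for both ordered pairs. In the converse direction, I would take an arbitrary feasible complete EQ1/EF1 assignment; feasibility pins each anchor to its only accepting vehicle, and then a case analysis over which request minimises $p_k(R_k\setminus\{r_j\})$ (respectively $p_i(R_k\setminus\{r_j\})$) would show that the ``remove one item'' slack is insufficient to tolerate any imbalance of the $a_j$'s beyond what a {\sc Partition} provides, forcing the sum on one side to equal exactly $S$.

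The hardest step will be calibrating the anchor profits so that the ``remove one item'' degree of freedom built into EQ1/EF1 cannot be exploited to bypass the {\sc Partition} constraint --- in particular, discarding an anchor request in the EQ1/EF1 check must not allow the resulting inequality to slacken to a non-{\sc Partition} split, and the interaction of this calibration with the subjective valuations needed for EF1 will likely demand a slightly different anchor gadget than the purely objective calibration needed for EQ1. Once the calibration is settled, the reduction is clearly polynomial-time, and because only the magnitudes of the $a_j$'s grow with the input size, the hardness inherited from {\sc Partition} is weak $\NP$-hardness, as claimed.\QEDB
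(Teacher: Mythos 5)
Your proposal is a plan, not a proof: the entire burden of the reduction is concentrated in the ``calibration'' of the anchor profits, which you explicitly defer, so the argument as written has a gap exactly where the difficulty lies. Worse, the specific two-vehicle gadget you propose cannot be calibrated for EQ1 at all. With two drivers, additive profits, and anchors pinned by feasibility, write $x$ for the number-request profit assigned to $v_1$; the two EQ1 conditions reduce (after removing the most profitable item from the opposing bundle) to an interval constraint of the form $S+\tfrac{\delta-M_2}{2}\leq x\leq S+\tfrac{\delta+M_1}{2}$, where $M_i$ is the largest own-profit of an item in $R_i$ and $\delta$ is the difference of the anchor profits. This window always has width $\tfrac{M_1+M_2}{2}>0$, i.e.\ at least half the largest removable item, so it admits subset sums other than $S$ on general {\sc Partition} instances (e.g.\ $a=(3,3,2,2,2)$); no choice of anchor profits forces exact balance. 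The ``remove one item'' slack is symmetric and cannot be spent twice, which is precisely the obstruction your calibration would have to overcome and cannot.

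The paper escapes this with a qualitatively different mechanism using \emph{three} drivers and four special requests: driver $1$ is feasible only for two large requests $s_3,s_4$ worth $3P$ each, so after the one permitted removal drivers $2$ and $3$ each face a lower bound of $3P$; since each of them holds one special request worth $2P$ and they jointly split number-profit $2P$, the two \emph{lower} bounds sum to the total, forcing each share to equal exactly $P$. That is, the exactness comes from two one-sided inequalities whose sum is tight against a conservation constraint, not from sandwiching a single driver's share between an upper and a lower bound as in your gadget. The paper also makes all three drivers' valuations identical so that EQ1 and EF1 coincide on the constructed instance, handling both problems with one reduction; your remark that EF1 ``will likely demand a slightly different anchor gadget'' is true for your construction (for EF1 one can in fact salvage a two-vehicle gadget by having each driver value their own anchor at $0$ and the other's anchor at $\geq\max_j a_j$), but that asymmetric-valuation trick does not transfer to EQ1, so you would still need the three-driver idea or something equivalent to complete the EQ1 half of the theorem.
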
  

\begin{proof}
We reduce from the popular weakly $\NP$-hard partition problem. For multiset $\lbrace p_1, \ldots , p_m\rbrace$ of positive numbers, whose sum is equal to $2\cdot P>m\geq 1$, this problem asks whether there is an equal partition of $\lbrace p_1, \ldots , p_m\rbrace$ into two multisets, where the sum of each multiset is $P$. Given an instance of the partition problem, let us construct an instance of {\sc Feasible Complete EQ1/EF1} with $3$ drivers and $(m+4)$ requests: $m$ number-requests $\lbrace r_1, \ldots , r_m\rbrace$ and four special-requests $\lbrace s_{1}, \ldots, s_{4}\rbrace$. 

\begin{center}
\begin{tabular}{c|cCCCC}
\hline
Requests & $\forall j\in\lbrace 1,\ldots,m\rbrace: r_j$ & $s_{1}$ & $s_{2}$ & $s_{3}$ & $s_{4}$  \\ \hline
\multicolumn{6}{c}{profits}\\
\hline
Driver $1$ & $\forall j\in\lbrace 1,\ldots,m\rbrace: p_j$ & $2\cdot P$ & $2\cdot P$ & $3\cdot P$ & $3\cdot P$  \\
Driver $2$ & $\forall j\in\lbrace 1,\ldots,m\rbrace: p_j$ & $2\cdot P$ & $2\cdot P$ & $3\cdot P$ & $3\cdot P$  \\
Driver $3$ & $\forall j\in\lbrace 1,\ldots,m\rbrace: p_j$ & $2\cdot P$ & $2\cdot P$ & $3\cdot P$ & $3\cdot P$ \\
\hline
\multicolumn{6}{c}{feasibilities}\\
\hline
Driver $1$ & $\forall j\in\lbrace 1,\ldots,m\rbrace: 0$ & $0$ & $0$ & $1$ & $1$  \\
Driver $2$ & $\forall j\in\lbrace 1,\ldots,m\rbrace: 1$ & $1$ & $0$ & $0$ & $0$  \\
Driver $3$ & $\forall j\in\lbrace 1,\ldots,m\rbrace: 1$ & $0$ & $1$ & $0$ & $0$ \\
\hline
\end{tabular}
\end{center}

The profits sum up to $12\cdot P$. Also, we note that a feasible complete assignment cannot give $r_j$ to driver $1$ because their vehicle is not feasible for $r_j$, i.e.\ $f_{1j}=0$. Thus, an assignment is feasible complete iff $r_1,\ldots,r_m$ go to drivers $2$ and $3$, $s_{1}$ goes to driver $2$, $s_{2}$ goes to driver $3$, and $s_{3},s_{4}$ go to driver $1$. In this instance, we note that an assignment is EQ1 iff it is EF1. This follows by the fact that all three drivers have additive profits and, for each request, all of them have equal profits.

Let there be an equal partition of the integers. Then, we can assign to each of the drivers $2$ and $3$ a profit of $P$ from $r_1,\ldots,r_m$ plus a profit of $2\cdot P$ from $s_{1}$ and $s_{2}$, for a total profit of $3\cdot P$. Also, we can assign to driver $1$ a profit of $6\cdot P$ from $s_{3}$ and $s_{4}$. This assignment is feasible complete. Furthermore, driver $1$ with a profit of $6\cdot P$ is EQ1/EF1 of driver $2$ or driver $3$ because each of the profits of drivers $2$ and $3$ for their own bundles is $3\cdot P$, and the profit of driver $1$ for each of their bundles is also $3\cdot P$. Furthermore, drivers $2$ and $3$ are also EQ1/EF1 of driver $1$ once $s_{3}$ or $s_{4}$ is removed from the bundle of driver $1$, for an equal EQ1/EF1 profit of $3\cdot P$. Hence, the assignment is EQ1/EF1. 

Let there be a feasible complete EQ1/EF1 assignment. Hence, driver $1$ gets only $s_{3}$ and $s_{4}$. As a result, drivers $2$ and $3$ have each a profit of $6\cdot P$ for the bundle of driver $1$. Once $s_{3}$ or $s_{4}$ from this bundle is removed, drivers $2$ and $3$ have each a profit of $3\cdot P$ for the remaining item. Hence, each of them must get a profit of at least $3\cdot P$ from their own bundle. As driver $2$ gets $s_{1}$ and driver $3$ gets $s_{2}$, so each of them must get a profit of exactly $P$ from $r_1,\ldots,r_m$. There must be an equal partition of the integers. \QEDB
\end{proof} 

Finally, unlike complete EQ1/EF1 assignments which exist and are tractable, feasible complete EQ1/EF1 assignments may not exist, and deciding whether such assignments exist is intractable.

\section{Feasible complete FEQ1/FEF1 assignments}\label{sec:feastwo}

We continue our analysis with interactions between feasible complete assignments and the new fairness concepts FEQ1 and FEF1. For this purpose, we define FEQ1 and FEF1.

\begin{mydefinition}
$(R_1,\ldots,R_n)$ is {\em FEQ1} if, for each $v_i,v_k\in V$ where $F_{ii}=\lbrace r_j\in R_i | f_{ij}>0\rbrace$ and $F_{ik}=\lbrace r_j\in R_k | f_{ij}>0\rbrace\neq\emptyset$, $p_i(F_{ii})\geq p_k(F_{ik}\setminus\lbrace r_j\rbrace)$ holds for some $r_j\in F_{ik}$.
\end{mydefinition}

\begin{mydefinition}
$(R_1,\ldots,R_n)$ is {\em FEF1} if, for each $v_i,v_k\in V$ where $F_{ii}=\lbrace r_j\in R_i | f_{ij}>0\rbrace$ and $F_{ik}=\lbrace r_j\in R_k | f_{ij}>0\rbrace\neq\emptyset$, $p_i(F_{ii})\geq p_i(F_{ik}\setminus\lbrace r_j\rbrace)$ holds for some $r_j\in F_{ik}$.
\end{mydefinition}

These two notions may differ: see Example~\ref{exp:two}. That is, there are settings where the set of feasible complete FEQ1 assignments and the set of feasible complete FEF1 assignments are disjoint. 

Furthermore, a complete FEQ1/FEF1 assignment may not be feasible in some settings. This is possible because the profit of a given driver for a given feasible request could be large enough to guarantee these properties. Also, a feasible complete assignment may not be FEQ1/FEF1 because it may give all requests to a given driver.

\begin{mytheorem}\label{thm:three}
In some settings with vehicle feasibilities and additive driver profits, some complete FEQ1/\-FEF1 assignments may not be feasible, and some feasible complete assignments may not be FEQ1/\-FEF1.
\end{mytheorem}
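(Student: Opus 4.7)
The plan is to establish the two halves of the claim with two small, independent additive-profit instances; neither direction requires a reduction or a global argument, so the proof reduces to a careful verification of FEQ1/FEF1 in the definitions on these witnesses.

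For the first half (a complete FEQ1/FEF1 assignment that fails feasibility), I would take $V=\{v_1,v_2\}$, $R=\{r_1,r_2\}$, all profits equal to $1$, and feasibilities $f_{11}=f_{12}=f_{21}=1$ but $f_{22}=0$. The assignment $R_1=\{r_1\}$, $R_2=\{r_2\}$ is complete (every request that has a feasible vehicle is handed out) but not feasible, since $f_{22}=0$. For FEQ1/FEF1, the key observation is that $F_{22}=\{r_j\in R_2\mid f_{2j}>0\}=\emptyset$ because $v_2$ is infeasible for its only request, while $F_{21}=\{r_1\}$; the inequality $p_2(F_{22})\geq p_2(F_{21}\setminus\{r_1\})=p_2(\emptyset)=0$ holds trivially, and the symmetric check for $v_1$ against $v_2$ reduces to $p_1(\{r_1\})\geq p_1(F_{12}\setminus\{r_2\})=0$, which also holds.

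For the second half (a feasible complete assignment that fails FEQ1/FEF1), I would take $V=\{v_1,v_2\}$, $R=\{r_1,r_2\}$, all feasibilities equal to $1$, and all profits equal to $1$. The assignment $R_1=\{r_1,r_2\}$, $R_2=\emptyset$ is clearly feasible and complete. To see it violates both notions, note $F_{22}=\emptyset$ and $F_{21}=\{r_1,r_2\}$, so FEQ1 would require $p_2(\emptyset)=0\geq p_2(F_{21}\setminus\{r_j\})=1$ for some $r_j$, which fails; the same computation refutes FEF1.

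I do not expect any real obstacle: the only subtle point is that, unlike in the EQ1/EF1 case (Theorem~\ref{thm:one}), FEQ1/FEF1 restricts attention to feasible requests, so one must pick the feasibility pattern in the first instance so that exactly one driver's own bundle drops out of the comparison (here $F_{22}=\emptyset$), leaving only vacuous or trivially satisfied inequalities. Once that observation is in place, both verifications are one line each.
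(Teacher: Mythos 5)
Your proposal is correct and follows essentially the same approach as the paper: exhibiting small explicit instances and verifying the FEQ1/FEF1 definitions directly, with the key point in both cases being that the comparison is restricted to feasible requests. The only cosmetic difference is that you use two separate two-request instances (and in the first one make $F_{22}=\emptyset$ so the check degenerates to $0\geq 0$), whereas the paper packs both halves into a single three-request instance where a high-profit feasible request keeps the non-feasible complete assignment FEQ1/FEF1; both verifications go through.
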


\begin{proof}
We start with the first part. Let there be $v_1,v_2$, and $r_1,r_2,r_3$. Define the profits and feasibilities as: $p_{11}=1,p_{12}=1,p_{13}=5, p_{21}=1,p_{22}=1,p_{23}=5$ and $f_{11}=1,f_{12}=1,f_{13}=1,f_{21}=0,f_{22}=1,f_{23}=1$. Further, pick $(R_1,R_2)$ with $R_1=\lbrace r_3\rbrace$ and $R_2=\lbrace r_1,r_2\rbrace$. This one is complete. We observe that $p_1(R_1)>p_2(R_2)$/$p_1(R_1)>p_1(R_2)$ holds. We also observe that $p_2(\lbrace r_2\rbrace)=1>0=p_1(R_1\setminus\lbrace r_3\rbrace)$/$p_2(\lbrace r_2\rbrace)=1>0=p_2(R_1\setminus\lbrace r_3\rbrace)$ holds. Hence, $(R_1,R_2)$ is FEQ1/FEF1. However, $(R_1,R_2)$ is not feasible because of $r_1\in R_2$ and $f_{21}=0$. We end with the second part. In the same setting, pick $(R_1\cup R_2,\emptyset)$. This one is feasible complete, but it violates FEQ1/FEF1 because of $p_2(\emptyset)=0<1=p_1(\lbrace r_2\rbrace)<5=p_1(\lbrace r_3\rbrace)$/$p_2(\emptyset)=0<1=p_2(\lbrace r_2\rbrace)<5=p_2(\lbrace r_3\rbrace)$. \QEDB
\end{proof}

By this result, it follows that not every algorithm for computing feasible complete assignments can guarantee FEF1/FEQ1, and not every algorithm for complete FEF1/FEQ1 assignments can guarantee feasibility.

Although the set of feasible complete EQ1/EF1 assignments may be empty in some settings, the set of feasible complete FEQ1/FEF1 assignments is non-empty in each setting. We give algorithms for computing such assignments.

\subsection{FEQ1}

We first prove that feasible complete FEQ1 assignments always exist. More specifically, we give our first new algorithm for computing such assignments in every setting: Algorithm~\ref{alg:feqone}. 

\begin{algorithm}[t]
\caption{A feasible complete FEQ1 assignment with monotone profits.}\label{alg:feqone}
\textbf{Input}: $V$, $R$, $(p_{i})_{n}$, $(f_{ij})_{n\times m}$ \\
\textbf{Output}: a feasible complete FEQ1 assignment
\begin{algorithmic}[1]
\Procedure{Feasible-Min-Max}{}
\State {\sc RemVeh }$\gets V$
\State {\sc RemFeasReq }$\gets \lbrace r_j\in R|\exists v_i\in V:f_{ij}>0\rbrace$
\For{$v_i\in V$}
\State $R_i\gets\emptyset$
\EndFor
\While{{\sc RemFeasReq }$\neq\emptyset$}
\State $v_i\gets\arg\min_{v_k\in \mbox{{\scriptsize\sc RemVeh }}} p_k(R_k)$
\If{$\exists r_h\in \mbox{{\sc RemFeasReq }}:f_{ih}>0$}
\State $r_j\gets \arg\max_{r_h\in \mbox{{\tiny\sc RemFeasReq }}:f_{ih}>0} p_i(R_i\cup\lbrace r_h\rbrace)$
\State $R_i\gets R_i\cup\lbrace r_j\rbrace$
\State {\sc RemFeasReq }$\gets$ {\sc RemFeasReq }$\setminus\lbrace r_j\rbrace$
\Else 
\State {\sc RemVeh }$\gets$ {\sc RemVeh }$\setminus\lbrace v_i\rbrace$
\EndIf
\EndWhile
\State \Return $(R_1,\ldots,R_n)$
\EndProcedure
\end{algorithmic}
\end{algorithm}

At each round, Algorithm~\ref{alg:feqone} selects, among the remaining vehicles, a vehicle whose driver has the least profit over the assigned requests (i.e.\ {\em min}). Then, it checks whether the selected vehicle is feasible for at least one remaining request. If ``yes, it assigns to its driver one of their marginally most profitable feasible remaining requests (i.e.\ {\em max}). If ``no'', it removes the vehicle from the set of remaining vehicles.

\begin{mytheorem}\label{thm:four}
In each setting with vehicle feasibilities and monotone driver profits, Algorithm~\ref{alg:feqone} returns a feasible complete FEQ1 assignment, given that it uses oracle access to the profits.
\end{mytheorem}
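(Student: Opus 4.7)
The plan is to verify three properties of the output $(R_1,\ldots,R_n)$: feasibility, completeness, and FEQ1. Feasibility is immediate from line 10, which only assigns $r_j$ to $R_i$ when $f_{ij}>0$. For completeness, I note that each iteration either removes a request from {\sc RemFeasReq} or removes a vehicle from {\sc RemVeh}, so the loop halts after at most $n+m$ iterations, and only when {\sc RemFeasReq} is empty. One should also check that {\sc RemVeh} does not empty prematurely: while $r_h\in$ {\sc RemFeasReq} there exists some $v_i$ with $f_{ih}>0$, and such a $v_i$ can never trigger line 14 as long as $r_h$ is remaining.

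The main step is FEQ1. I would fix drivers $v_i,v_k$ with $F_{ik}\neq\emptyset$ and let $r^*$ be the \emph{last} request ever added to $R_k$ that satisfies $f_{ir^*}>0$. Write $R_i^{t^*}$ and $R_k^{t^*}$ for the partial bundles just before $r^*$ is assigned. The crucial sub-claim is that $v_i\in$ {\sc RemVeh} at that moment. Since $r^*$ sits in {\sc RemFeasReq} immediately before $t^*$ and $f_{ir^*}>0$, the condition on line 9 would have held for $v_i$ at every earlier round as well; hence whenever $v_i$ had been the argmin previously, line 10 would have fired rather than line 14, so $v_i$ was never removed. Granted the sub-claim, the selection of $v_k$ on line 8 yields $p_k(R_k^{t^*})\leq p_i(R_i^{t^*})$.

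To close, I would observe that by the choice of $r^*$ every element of $F_{ik}\setminus\{r^*\}$ was already in $R_k$ at time $t^*$, so $F_{ik}\setminus\{r^*\}\subseteq R_k^{t^*}$, and monotonicity of $p_k$ gives $p_k(F_{ik}\setminus\{r^*\})\leq p_k(R_k^{t^*})$. Because line 10 only assigns requests feasible for the receiving driver, $F_{ii}=R_i$, and $R_i^{t^*}\subseteq R_i$ combined with monotonicity of $p_i$ yields $p_i(R_i^{t^*})\leq p_i(F_{ii})$. Chaining the three inequalities delivers $p_k(F_{ik}\setminus\{r^*\})\leq p_i(F_{ii})$, which is FEQ1 witnessed by $r^*$. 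The main obstacle I anticipate is not the inequality chain itself but the bookkeeping for the sub-claim: one must certify that $v_i$ has not been removed from {\sc RemVeh} by the time $r^*$ is placed into $R_k$, and this rests entirely on $r^*$ persisting in {\sc RemFeasReq} long enough to block line 14 from ever firing on $v_i$.
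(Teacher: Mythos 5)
Your proposal is correct, and it takes a genuinely different route from the paper. The paper proves the theorem by induction on the iteration number: it shows that every partial assignment $M(l)$ produced by the algorithm is FEQ1, with the inductive step split into three cases depending on whether the two vehicles under comparison are the recipient of the current request, still in {\sc RemVeh}, or already removed. You instead argue directly on the final assignment via a ``last relevant item'' argument: for a fixed ordered pair $(v_i,v_k)$ you take the last request $r^*$ added to $R_k$ with $f_{ir^*}>0$, certify that $v_i$ was still in {\sc RemVeh} at that moment (because $r^*$ itself blocks line 14 from firing on $v_i$), and chain $p_k(F_{ik}\setminus\{r^*\})\leq p_k(R_k^{t^*})\leq p_i(R_i^{t^*})\leq p_i(F_{ii})$ using monotonicity, the min-selection rule, and the fact that the algorithm only ever assigns feasible requests (so $F_{ii}=R_i$). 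Both arguments are sound. Yours is shorter and isolates the one genuinely delicate piece of bookkeeping --- that $v_i$ cannot have been retired before $r^*$ is placed --- into a single explicit sub-claim, whereas the paper's induction handles retired vehicles through its case analysis. What the paper's invariant buys in exchange is a strictly stronger statement: every intermediate partial assignment is already FEQ1, i.e., the algorithm is ``anytime fair,'' which your final-state argument does not establish. Note also that your inequality chain does not actually use the {\em max} part of the min--max selection on line 10 (the choice of the marginally most profitable feasible request); neither does the paper's FEQ1 argument in any essential way, so this is not a gap in either proof.
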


\begin{proof} We observe that the returned $(R_1,\ldots,R_n)$ is feasible complete. This is because each request $r_j$, for which some vehicle is feasible, is assigned to some driver $i$ with vehicle $v_i$ that is feasible for $r_j$, i.e.\ $f_{ij}>0$. 

WLOG, we let $(r_1,\ldots, r_s)$ denote the order in which the requests are assigned to drivers. We note that $s\leq m$ holds because the algorithm does not assign requests for which no vehicle is feasible. We also let $h$ denote the iteration number at termination. We note that $h\geq (s+1)$ holds because the algorithm may not assign requests at some iterations. We next let $M(l)=(R_1(l),\ldots,R_n(l))$ denote the partial assignment at the start of iteration $l$. By induction on $l\in\lbrace 1,\ldots, h\rbrace$, we will prove that $M(l)$ is FEQ1. The result will then follow for $l=h$.

In the base case, we let $p=1$. $M(1)=(\emptyset,\ldots,\emptyset)$ is FEQ1 by definition. In the hypothesis, we let $l<h$ and suppose that $M(l)$ is FEQ1. We note that {\sc RemFeasReq} $\neq\emptyset$ holds. In the step case, if no $r_j$ is assigned at iteration $l$ then $M(l+1)=M(l)$. This one is FEQ1 by the hypothesis. For this purpose, we suppose that $r_j$ is the request assigned to $v_i$ at iteration $l$. We note that {\sc RemVeh} $\neq\emptyset$ holds. Let us look at $M(l+1)=(R_1(l),\ldots,R_i(l+1),\ldots,R_n(l))$, where $R_i(l+1)=R_i(l)\cup\lbrace r_j\rbrace$. We will prove in this case that $M(l+1)$ is FEQ1. 

For $v_k,v_l\in V$ with $k\neq i,l\neq i$, we observe that driver $k$ is FEQ1 of driver $l$ and driver $l$ is FEQ1 of driver $k$. These follow because in $M(p+1)$ their profits are the same as in $M(p)$. For this reason, we consider three cases for the drivers of $v_i$ and $v_k\in V\setminus\lbrace v_i\rbrace$. 

{\em Case 1}: Pick $v_i\in$ {\sc RemVeh} and $v_k\in V\setminus\lbrace v_i\rbrace$. By the monotonicity, driver $i$ have in $M(l+1)$ a profit that is not lower than their profit in $M(l)$. Consequently, $p_i(F_{ii}(l+1))=p_i(F_{ii}(l)\cup\lbrace r_j\rbrace)\geq p_i(F_{ii}(l))\geq p_k(F_{ik}(l)\setminus\lbrace r_s\rbrace)=p_k(F_{ik}(l+1)\setminus\lbrace r_s\rbrace)$ holds for some $r_s\in F_{ik}(l+1)$ because of $f_{ij}>0$ and $F_{ik}(l+1)=F_{ik}(l)$. Driver $i$ is FEQ1 of driver $k$.

{\em Case 2}: Pick $v_k\in V\setminus${\sc RemVeh} and $v_i\in$ {\sc RemVeh}. We observe that the driver of $v_k$ has in $M(l+1)$ the same bundle and, therefore, profit as in $M(l)$. Therefore, we conclude $F_{kk}(l+1)=F_{kk}(l)$. We derive $p_k(F_{kk}(l+1))=p_k(F_{kk}(l))\geq p_i(F_{ki}(l)\setminus\lbrace r_s\rbrace)=p_i(F_{ki}(l+1)\setminus\lbrace r_s\rbrace)$ for some $r_s\in F_{ik}(l+1)$ because of $f_{kj}=0$ and $F_{ki}(l+1)=F_{ki}(l)$. Driver $k$ is FEQ1 of driver $i$.

{\em Case 3}: Pick $v_k\in$ {\sc RemVeh}$\setminus\lbrace v_i\rbrace$ and $v_i\in$ {\sc RemVeh}. For the driver of $v_k$, we observe that $p_k(R_{k}(l))\geq p_i(R_{i}(l))$ holds by the fact that $v_i$ is the least bundle profit vehicle within {\sc RemVeh}. Therefore, as the drivers of $v_i$ and $v_k$ are assigned only feasible requests, $p_k(F_{kk}(l))\geq p_i(F_{ii}(l))$ also holds because of $F_{kk}(l)=R_{k}(l)$ and $F_{ii}(l)=R_{i}(l)$. 

As $v_k$ might not be feasible for all requests in $F_{ii}(l)$, we derive $F_{ii}(l)\supseteq F_{ki}(l)$. By monotonicity, it follows that $p_i(F_{ii}(l))\geq p_i(F_{ki}(l))$ holds. Consequently, $p_k(F_{kk}(l))\geq p_i(F_{ki}(l))$ holds as well. 

We note that $F_{kk}(l+1)=F_{kk}(l)$ holds because the driver of $v_k$ has in $M(l+1)$ the same bundle and, therefore, profit as in $M(l)$. If $f_{kj}=0$, we derive $F_{ki}(l+1)=F_{ki}(l)$ and $p_k(F_{kk}(l+1))\geq p_i(F_{ki}(l+1))$. Otherwise, we derive $F_{ki}(l+1)\setminus\lbrace r_j\rbrace=F_{ki}(l)$ and $p_k(F_{kk}(l+1))\geq p_i(F_{ki}(l+1)\setminus\lbrace r_j\rbrace)$. Driver $k$ is FEQ1 of driver $i$. \QEDB
\end{proof}

By comparison, Algorithm~\ref{alg:feqone} may fail to return feasible complete FEF1 assignments. To see this, we refer the reader to Example~\ref{exp:two}. 

\subsection{FEF1}
 
We next prove that feasible complete FEF1 assignments always exist. For computing such assignments, we give a non-trivial trifid modification of the {\em classical} envy graph algorithm from \cite{lipton2004}. In fair division without feasibilities, this algorithm assigns items in an arbitrary order. At each round, it constructs a directed {\em envy graph} that encodes the envy relation between the agents wrt the (incomplete) assignment. Then, it assigns the current item to an {\em unenvied} agent. After that, it eliminates all {\em envy cycles} that may form as a result. This elimination allows the algorithm to ensure that there is an unenvied agent to whom it can assign the next item. In our setting with feasibilities, this algorithm ignores the feasibilities and, for this reason, it may return non-feasible assignments. 

\begin{center}
\fbox{
\begin{minipage}{0.95\columnwidth}
{\bf Modification 1 (Feasible Envy Graph)}: Given (incomplete) assignment $(R_1,\ldots,R_n)$, we define {\em feasible envy graph} as $(V,E)$, where the vertex set $V$ is the set of vehicles and the edge set $E$ is such that $(v_i,v_k)\in E$ iff $p_i(F_{ii})<p_i(F_{ik})$. We define also {\em feasible envy graph projection} wrt $r_j\in R$ as $(V_j,E_j)$, where $V_j=\lbrace v_i\in V|f_{ij}>0\rbrace$ and $(v_i,v_k)\in E_j$ iff $v_i\in V_j,v_k\in V_j,(v_i,v_k)\in E$. In each round of the algorithm, we propose first to use both the envy graph and $(V_j,E_j)$
\end{minipage}
}
\end{center}

\begin{center}
\fbox{
\begin{minipage}{0.95\columnwidth}
{\bf Modification 2 (Unenvied Driver)}: In each round of the algorithm, we propose not to assign the current $r_j$ to a driver that is unenvied wrt the envy graph, but to a driver that is {\em unenvied} wrt $(V_j,E_j)$. 
\end{minipage}
}
\end{center}

\begin{center}
\fbox{
\begin{minipage}{0.95\columnwidth}
{\bf Modification 3 (Feasible Cycle Elimination)}: In each round of the algorithm, we propose not to eliminate the cycles in the envy graph, but the {\em feasible cycles} in $(V,E)$. Thus, if $p_i(F_{ii})<p_i(F_{ik})$ holds in such a cycle then driver $i$ must receive $R_k$ after the cycle is eliminated. In each round of the algorithm, after this step, we also propose to return to the pool of unassigned requests all swapped requests that, as a result of all cycle eliminations, are assigned to vehicles that are non-feasible for them. 
\end{minipage}
}
\end{center}

We are now ready to present our second new algorithm: Algorithm~\ref{alg:fefone}. This algorithm implements modifications 1, 2, and 3. Thus, it returns feasible complete FEF1 assignments in every setting.
 
\begin{algorithm}[h]
\caption{A feasible complete FEF1 assignment with monotone profits.}\label{alg:fefone}
\textbf{Input}: $V$, $R$, $(p_{i})_{n}$, $(f_{ij})_{n\times m}$ \\
\textbf{Output}: a feasible complete FEF1 assignment
\begin{algorithmic}[1]
\Procedure{Feasible-Envy-Graph}{}
\State {\sc RemVeh }$\gets V$
\State {\sc RemFeasReq }$\gets \lbrace r_j\in R|\exists v_i\in V:f_{ij}>0\rbrace$
\For{$v_i\in V$} 
\State $R_i\gets\emptyset$
\EndFor
\While{{\sc RemFeasReq }$\neq\emptyset$}
\State pick some $r_j$ from {\sc RemFeasReq } 
\State compute the feasible envy graph projection $(V_j,E_j)$ \Comment{Modification 1}
\State pick vehicle vi of driver i that is unenvied wrt $(V_j,E_j)$ \Comment{Modification 2}
\State $R_i\gets R_i\cup\lbrace r_j\rbrace$
\State {\sc RemFeasReq }$\gets$ {\sc RemFeasReq }$\setminus\lbrace r_j\rbrace$
\State $(R_1,\ldots,R_n)\gets$ eliminate all feasible cycles in $(V,E)$  \Comment{Modification 3}
\State $\forall v_i\in V: \mbox{\sc RetNonFeasReq}_i\gets \lbrace r_s\in R_i|f_{is}=0\rbrace$  
\State $\forall v_i\in V: R_i\gets R_i\setminus \mbox{\sc RetNonFeasReq}_i$
\State {\sc RemFeasReq }$\gets$ {\sc RemFeasReq }$\textstyle\bigcup_{v_i\in V} \mbox{\sc RetNonFeasReq}_i$
\EndWhile
\State \Return $(R_1,\ldots,R_n)$
\EndProcedure
\end{algorithmic}
\end{algorithm}  
 
\begin{mytheorem}\label{thm:five}
In each setting with vehicle feasibilities and monotone driver profits, Algorithm~\ref{alg:fefone} returns a feasible complete FEF1 assignment, given that it uses oracle access to the profits.
\end{mytheorem}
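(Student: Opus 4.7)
The plan is to maintain two loop invariants throughout the while-loop: (i) the current partial assignment $(R_1,\ldots,R_n)$ is feasible (each $r_j\in R_i$ satisfies $f_{ij}>0$) and FEF1, and (ii) the feasible envy graph $(V,E)$ is acyclic. Both hold initially because all bundles are empty. The induction step splits into the assignment phase (lines~8--12) and the cycle-elimination phase (lines~13--16). For the assignment phase, existence of an unenvied $v_i$ in $(V_j,E_j)$ follows because $V_j\neq\emptyset$ (since $r_j\in$ RemFeasReq has at least one feasible vehicle) and $(V_j,E_j)$ is an induced subgraph of the acyclic $(V,E)$, which therefore possesses a source.

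Preservation of FEF1 after placing $r_j$ into $R_i$ splits into three subcases. Any driver $k\neq i$ with $v_k\in V_j$ was not envying $i$ in the projection, so $p_k(F_{kk})\geq p_k(F_{ki})$; since $f_{kj}>0$, the new $F_{ki}$ gains exactly $\{r_j\}$, which can be removed to witness FEF1. Any $k$ with $f_{kj}=0$ has $F_{ki}$ unchanged, so FEF1 is inherited from the invariant. For $i$'s own FEF1, $F_{ii}$ grows by $\{r_j\}$ (since $f_{ij}>0$) while each $F_{ik}$ is unchanged, so monotonicity delivers the inequality. For the cycle-elimination phase, I would adapt the classical envy-graph argument of \cite{lipton2004}: along a feasible cycle $i_1\to\cdots\to i_s\to i_1$, each $i_t$ receives $R_{i_{t+1}}^{\text{old}}$ restricted to feasible requests (lines~14--15), so $F_{i_t,i_t}^{\text{new}}=F_{i_t,i_{t+1}}^{\text{old}}$ and by the cycle inequality $p_{i_t}(F_{i_t,i_t}^{\text{new}})>p_{i_t}(F_{i_t,i_t}^{\text{old}})$. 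For every third-party driver $k$ viewing a swapped bundle of $j=i_{t'}$, one has $F_{k,j}^{\text{new}}\subseteq F_{k,i_{t'+1}}^{\text{old}}$, and the witness $r^*$ from the prior FEF1 invariant together with monotonicity yields the new FEF1 inequality; restoration of acyclicity of $(V,E)$ is by definition of the elimination step.

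For termination, the potential $\Phi=\sum_i p_i(R_i)$ weakly increases on every assignment step and strictly increases on every cycle-elimination step; since $\Phi$ attains only finitely many values (one per partial assignment of the finite request set), only finitely many cycle eliminations occur, and between them each assignment iteration strictly reduces RemFeasReq, so the outer loop halts. Feasibility at termination is immediate from lines~14--15, and completeness follows from the exit condition RemFeasReq$=\emptyset$ together with the invariant that only feasible requests sit in bundles. The step I expect to be the main obstacle is the rigorous check that the cycle-elimination phase preserves FEF1 under the deletion in lines~14--15: this deletion non-trivially shrinks the third-party view $F_{k,j}^{\text{new}}$ relative to the naive swap bundle $F_{k,i_{t'+1}}^{\text{old}}$, and verifying the FEF1 inequality requires carefully selecting the FEF1 witness (reused from the prior invariant's witness) and invoking monotonicity on two nested inclusions, rather than relying on the single monotonicity step that suffices in the feasibility-free setting of \cite{lipton2004}.
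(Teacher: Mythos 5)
Your proof is correct and follows essentially the same route as the paper's: a round-by-round invariant that the partial assignment is feasible and FEF1, the same three-case analysis for the assignment step (unenvied drivers in $(V_j,E_j)$, drivers with $f_{kj}=0$, and driver $i$ itself), the same witness-reuse-plus-monotonicity argument for the elimination-and-deletion step, and a feasible-welfare potential for termination. Two remarks: your explicit acyclicity invariant for the feasible envy graph, used to guarantee that an unenvied driver exists in the nonempty projection $(V_j,E_j)$, is a point the paper's proof leaves implicit and is worth keeping (just also note that the deletion in lines 14--15 leaves every $F_{kk}$ unchanged and only shrinks the sets $F_{ki}$, so it can only remove edges and acyclicity survives that step as well). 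Your termination argument is purely qualitative (finitely many potential values), whereas the paper bounds the number of rounds by $O(m\,U/u)$, with $U$ the maximum feasible utilitarian welfare and $u$ the minimum positive profit gap, which is what supports the pseudo-polynomial running time claimed in Section~\ref{sec:cont}.
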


\begin{proof}
Pick the round of $r_j$. We let $(R_1,\ldots,R_n)$ and $(S_1,\ldots,S_n)$ denote the start-round and end-round assignments. (1) If there are no cycles within the round, the algorithm moves to the next round with one less unassigned request. By monotonicity, as $S_i=R_i\cup\lbrace r_j\rbrace$ and $S_k=R_k$ for each $k\neq i$, $\textstyle\sum_{v_i\in V} p_i(\lbrace r_t\in R_i|f_{it}>0\rbrace)\leq\textstyle\sum_{v_i\in V} p_i(\lbrace r_t\in S_i|f_{it}>0\rbrace)$ holds. (2) Otherwise, it may return some of the assigned requests to the set of unassigned requests. By monotonicity, as cycle eliminations strictly increase the feasible profits of some drivers, $\textstyle\sum_{v_i\in V} p_i(\lbrace r_t\in R_i|f_{it}>0\rbrace)<\textstyle\sum_{v_i\in V} p_i(\lbrace r_t\in S_i|f_{it}>0\rbrace)$ holds. 

These observations are valid for each round. We note that there can be at most $m$ consecutive rounds where (1) holds. After that, either all requests from {\sc RemFeasReq} are assigned or there is a round where (2) holds. We let $U$ denote the maximum value $\textstyle\sum_{v_i\in V} p_i(F_{ii})$ of the feasible utilitarian welfare in any assignment $(R_1,\ldots,R_n)$ and $u$ denote the minimum strictly positive difference $|p_i(X)-p_k(Y)|$ for each $v_i,v_k\in V$ and each $X\subseteq \lbrace r_t\in R|f_{it}>0\rbrace,Y\subseteq\lbrace r_t\in R|f_{kt}>0\rbrace$. After $O(m\frac{U}{u})$ rounds, this welfare reaches $U$ or {\sc RemFeasReq}$=\emptyset$ holds. Then, the algorithm terminates after that many rounds. 

Let us consider the round of $r_j$. Assume that $(R_1,\ldots,R_n)$ is FEF1. Within this round, $r_j$ is assigned to driver $i$. By monotonicity, for $v_i\in V$, $p_i(R_i\cup\lbrace r_j\rbrace)\geq p_i(R_i)$ holds. Therefore, $p_i(F_{ii}\cup\lbrace r_j\rbrace)\geq p_i(F_{ii})$ holds. By assumption, for each $v_k\in V$, it follows $p_i(F_{ii})\geq p_i(F_{ik}\setminus\lbrace r_s\rbrace)$ for some $r_s\in F_{ik}$. Hence, $p_i(F_{ii}\cup\lbrace r_j\rbrace)\geq p_i(F_{ik}\setminus\lbrace r_s\rbrace)$ holds. By the choice of driver $i$, for each $v_k\in V_j\setminus\lbrace v_i\rbrace$ with $f_{kj}>0$, driver $k$ has no feasible envy for $R_i$. Hence, $p_k(F_{ki})-p_k(F_{kk}) \leq 0$ and $p_k(F_{ki})-p_k(F_{kk}) \leq p_k(r_j)$ hold. By assumption, for each $v_k\not\in V_j$ with $f_{kj}=0$, it follows $p_k(\lbrace r_t\in R_i\cup\lbrace r_j\rbrace|f_{kt}>0\rbrace)-p_k(F_{kk})=p_k(F_{ki})-p_k(F_{kk}) \leq p_k(r_s)$ for some $r_s\in \lbrace r_t\in R_i\cup\lbrace r_j\rbrace|f_{kt}>0\rbrace$. Hence, $(R_1,\ldots,R_i\cup\lbrace r_j\rbrace,R_{i+1},\ldots,R_n)$ is FEF1. 

We let $(R_1,\ldots,R_n)$ denote this assignment. After all cycles are eliminated, the bundles $R_1,\ldots,R_i$ are permuted to $R_{j_1},\ldots,R_{j_n}$, respectively. For each $v_k\in V$, the feasible profit of driver $k$ could not have decreased. Therefore, $p_k(\lbrace r_s\in R_{j_k}|f_{ks}>0\rbrace)\geq p_k(F_{kk})$ holds. By assumption, for each $R_{j_h}$ that was previously $R_l$, it follows $p_k(\lbrace r_t\in R_{j_h}|f_{kt}>0\rbrace)-p_k(\lbrace r_t\in R_{j_k}|f_{kt}>0\rbrace)\leq p_k(F_{kl})-p_k(F_{kk})\leq p_k(r_s)$ for some $r_s\in \lbrace r_t\in R_{j_h}|f_{kt}>0\rbrace$. Therefore, $(R_{j_1},\ldots,R_{j_n})$  is FEF1. 

We let $(R_1,\ldots,R_n)$ denote this assignment. After all non-feasible requests are returned, $R_1,\ldots,R_n$ are changed to $S_1=R_{1}\setminus\mbox{\sc RetNonFeasReq}_1,\ldots,S_n=R_n\setminus\mbox{\sc RetNonFeasReq}_n$, respectively. For each $v_k\in V$, we derive $p_k(S_k)=p_k(F_{kk})$. By assumption, for each $R_l$, it follows $p_k(F_{kk})\geq p_k(F_{kl}\setminus\lbrace r_s\rbrace)$ for some $r_s\in F_{kl}$. But, $R_l$ is now $S_l$. If $r_s\not\in\mbox{\sc RetNonFeasReq}_l$, then $F_{kl}\setminus\lbrace r_s\rbrace\supseteq (F_{kl}\setminus\mbox{\sc RetNonFeasReq}_l)\setminus\lbrace r_s\rbrace$ holds. Otherwise, $F_{kl}\setminus\lbrace r_s\rbrace\supseteq F_{kl}\setminus\mbox{\sc RetNonFeasReq}_l$ holds. By monotonicity, we derive $p_k(F_{kk}\setminus\lbrace r_s\rbrace)\geq p_k((F_{kl}\setminus\mbox{\sc RetNonFeasReq}_l)\setminus\lbrace r_t\rbrace)$ for some $r_t\in F_{kl}\setminus\mbox{\sc RetNonFeasReq}_l$. Consequently, $(S_1,\ldots,S_n)$ is FEF1. As the algorithm terminates, all requests from {\sc RemFeasReq} are assigned. The returned assignment is FEF1. 

Finally, let us consider the round of $r_j$. Assume that $(R_1,\ldots,R_n)$ is feasible. Eliminating all cycles may permute the bundles of drivers and thus ruin the feasibility. However, returning all non-feasible requests to {\sc RemFeasReq} restores this feasibility. By assumption, $(S_1,\ldots,S_n)$ is guaranteed to be feasible. As the algorithm terminates, all requests from {\sc RemFeasReq} are assigned. The returned assignment is feasible complete. \QEDB
\end{proof}

By comparison, Algorithm~\ref{alg:fefone} may fail to return feasible complete FEQ1 assignments. To see this, we refer the reader to Example~\ref{exp:two}. 

\section{Future directions}\label{sec:fut}

In the future, we will study the interaction between FEQ1, FEF1, and an economic concept such as Pareto optimality, as well as manipulations of algorithms for FEQ1 and FEF1. We will also consider VRP settings with time windows and ordering constraints, and study the impact fairness has on the total delay in such settings. Non-monotone preferences are also left for future work. Finally, we will measure through simulations the quality of routing in fair assignments.

\section{Acknowledgements}\label{sec:ack}

Martin Damyanov Aleksandrov was supported by the DFG Individual Research Grant on ``Fairness and Efficiency in Emerging Vehicle Routing Problems'' (497791398).

\bibliographystyle{splncs04}
\bibliography{submission}

\end{document}